\theoremstyle{plain}% Theorem-like structures provided by amsthm.sty
\newtheorem{theorem}{Theorem}[section]
\theoremstyle{definition}
\theoremstyle{remark}
\begin{document}

%\articletype{ARTICLE TEMPLATE}% Specify the article type or omit as appropriate

\title{Peak Effects in Stable Linear Difference Equations}

\author{
\name{B.~T. Polyak\textsuperscript{a}\thanks{CONTACT: B.~T. Polyak. Email: boris@ipu.ru},
P.~S. Shcherbakov\textsuperscript{a}, and
G. Smirnov\textsuperscript{b}}
\affil{\textsuperscript{a}Institute of Control Science, RAS, 65, Profsoyuznaya, Moscow, Russia;\\
~~e-mail:~~boris{@}ipu.ru, cavour118{@}mail.ru\\
\textsuperscript{b}University of Minho, Braga, Portugal;\\
~~email:~~smirnov{@}math.uminho.pt}
}

\maketitle

\begin{abstract}
We consider asymptotically stable scalar difference equations with unit-norm initial conditions.
First, it is shown that the solution may happen to deviate far away from the equilibrium point at finite time instants prior to converging to zero.
Second, for a number of root distributions and initial conditions, exact values of deviations or lower bounds are provided.
Several specific difference equations known from the literature are also analyzed and estimates of deviations are proposed.
Third, we consider difference equations with non-random noise (i.e., bounded-noise autoregression) and provide upper bounds on the solutions.
Possible generalizations, e.g., to the vector case are discussed and directions for future research are outlined.

\end{abstract}

\begin{keywords}
Linear difference equations; stability; nonzero initial conditions; peak effect; large deviations; nonrandom noise
\end{keywords}

\section{Introduction}

Analysis of transients in systems described by ordinary differential or difference equations has always been the subject
of intense research in automatic control theory, signal processing, numerical analysis, and other numerous fields.
Usually, by transient is meant the response of a stable system with zero initial conditions to
a typical input signal such as unit-step, harmonic, etc. For instance, in control theory, possible large deviations from the steady state
are referred to as \emph{overshoot}, and there exist many results on these effects in linear dynamical systems.

Much less attention in the literature has been paid to the closely related phenomena, typically known as \emph{peak effects},
induced by nonzero initial conditions in the absence of exogenous input signal.
Clearly, both overshoot and peaks are to be avoided in the design to provide ``smooth'' (or even monotone) transient.
Among the relatively recent works on peak effects in \emph{continuous} time systems, we mention
\cite{BTSmirnov_Automatica,peakAiT}; %,Whidborne, Hinrichsen};
also see the references therein.

To the best of our knowledge, the discrete-time case is unexplored, and the results cannot be directly derived from the
continuous-time ones, since the nature of peaking effects is different. The basic monograph~\cite{Elaydi} provides a deep and detailed
introduction to the general theory of linear and nonlinear difference equations, stability, asymptotic behavior of solutions, their properties,
and numerous applications. A closely related issue, the behavior of norms of powers of Schur stable matrices in the autonomous vector scheme
$x_{k+1} = Ax_k$, $x_k\in{\mathbb R}^n$, is also discussed; however peaking effects are not considered.
For this vector scheme, there exist LMI-based results on upper bounds \cite{kogan,malta}, but the most interesting problem,
lower estimates of the transient behavior, is still kept in shade, both in the vector and scalar cases.

The importance of research of peak effects in linear difference equations is motivated by numerous reasons.
For instance, analysis of nonlinear processes, which are typical to practical applications is often performed via use of their
linearizations in a neighborhood of a stable operating point.
Possible large deviations of the linearized trajectories may lead to leaving the basin of attraction and loss of validity.

Another motivation is the analysis of nonasymptotic behavior of the currently popular methods for function minimization,
such as the heavy-ball method~\cite{heavy_ball} and Nesterov's accelerated gradient descent~\cite{nesterov}.
If applied to quadratic functions, these methods are described by second-order linear vector difference equations.
It was discovered via numerous simulations ~\cite{GB} that these methods exhibit nonmonotonic convergence to the minimum.
Peaking effects can be an explanation of such behavior.

Clearly, for a particular difference equation and specified initial conditions, the peak of the trajectory can be found directly via
numerical simulations. However, for high dimensions, computations may not be numerically stable.
Next, finding the worst-case initial conditions in the unit ball in one or another vector norm leading to the highest peak is not easy;
moreover, estimating the values of peaks for \emph{classes of equations} is much more complicated and most challenging.
Last but not least, from the control theory perspective, design of minimum-peak feedback is extremely important.
Overall, exploration in this direction is highly demanded.

Another related field of research is the examination of peaking effects caused by both the initial conditions and exogenous \textit{noise}.
In contrast to the assumption on the Gaussian nature of noise typically adopted in such autoregression models,
we study the situation with bounded non-random perturbations. Such models are in common use in the population dynamics and macroeconomics,
e.g., see~\cite{Turchin} and ~\cite{OS}, and the analysis of possible peak phenomena observed in these models is in demand.

In this paper we focus our attention on some of these issues.
We provide examples where peaks of solutions of stable scalar difference equations are unavoidable and estimate the value of peak
in certain specific cases, both homogenous and nonhomogenous; sometimes, exact closed-form expressions are obtained.

The following notation is used in the paper:
$\mathbb R$ stands for the field of real numbers;
$\mbox{}^\top$~is the transposition sign;
$:=$ denotes equality by definition;
$|\cdot|$ denotes the absolute value of a number, and
$\|\cdot \|$ is a norm of a vector or a matrix;
${\rm e}$ stands for ${\rm exp}(1)$;
%$\lambda_i(\cdot)$ denotes the $i$th eigenvalue of a matrix;
for integer $p\geq q$, the binomial coefficient is denoted by $C_p^q = \tfrac{p!}{q!(p-q)!}$;
$\lfloor\cdot\rfloor$ and $\lceil\cdot\rceil$ denote rounding toward negative and positive infinity, respectively.

We also note that the first preliminary results in this area of research were presented by the authors in the conference papers~\cite{timisoara,malta}.

\section{Large deviations of solutions}
\label{s:generic}

\subsection{Formulation of the problem}
We consider the generic $n$th order scalar linear difference equation of the form
\begin{equation}
\label{eq:generic}
x_k + a_1x_{k-1} + \dots + a_nx_{k-n} = 0, \quad k = n,n+1,\dots; \quad a_i\in{\mathbb R},
\end{equation}
with initial conditions
\begin{equation*}
\label{generic:init_cond}
x^{(0)}=(x_{0}, \;\dots,\; x_{n-1}) \in {\mathbb R}^n.
\end{equation*}
The characteristic polynomial of~\eqref{eq:generic} is
\begin{equation}
\label{eq:poly}
p(\lambda) = \lambda^n + a_1\lambda^{n-1} + \dots + a_{n-1}\lambda + a_n,
\end{equation}
and it is assumed to be stable, i.e., all its roots $\lambda_i$ belong to the open unit disc on the complex plane, so that for any finite initial $x^{(0)}$,
the solution asymptotically tends to zero as $k\to\infty$. Without loss of generality we assume that $x^{(0)}$ has unit norm,
namely, $\|x^{(0)}\|_\infty = 1$.
Our aim is to characterize the \emph{nonasymptotic} behavior of solutions; specifically, we are interested in estimating the following quantity:
\begin{equation}
\label{peak001}
\eta( x^{(0)}) = \max_{k=n,n+1,\dots} |x_k|,
\end{equation}
which will be referred to as \emph{peak} of the solution (provided that $\eta( x^{(0)})>1$) for a given root location~$\Lambda$ of equation~\eqref{eq:generic}
and given initial condition~$x^{(0)}$. As said, in principle, finding~$\eta( x^{(0)})$ can be performed via straightforward computations.
A more interesting problem is to estimate the quantity (\emph{upper bound on peak})
$$
\eta = \max_{\Lambda}\max_{\|x^{(0)}\|_\infty=1}\eta(x^{(0)})
$$
which relates to the \emph{worst-case} initial conditions in the unit box for some class of root locations~$\Lambda$, or
$$
\underline{\eta}(x^{(0)}) = \min_{\Lambda}\eta(x^{(0)}),
$$
which is a lower bound on peak for a class of root locations and given initial conditions.

For the same difference equation we also consider the nonhomogenous case:
\begin{equation}
\label{eq:autoreg}
x_k + a_1x_{k-1} + \dots + a_nx_{k-n}=v_k, \quad |v_k| \le \varepsilon, \; v_k \in{\mathbb R}, \quad k = n,n+1,\dots
\end{equation}
That is, we deal with the \textit{autoregression model}; however, in contrast to the standard framework where \textit{Gaussian noise}
is considered, we analyse \textit{unknown-but-bounded noise}. The goal is to estimate upper bounds on $x_k$ in this situation.

\subsection{Equal roots: A closed-form solution}
\label{SS:equal_roots}

In this section we obtain closed-form expressions for the values of peak and peak instant for the solutions of~\eqref{eq:generic}
in the case where all roots of~\eqref{eq:poly} are equal, and demonstrate the possibility of large values of peak.

Consider equation~\eqref{eq:generic} having all roots  $\lambda_i$ of~\eqref{eq:poly} real and equal to $ \rho\in(0,1)$,
and initial conditions $x^{(0)}=(x_0,\; x_1,\;,\dots, x_{n-1})$.  Then \eqref{eq:generic} has the form
\begin{equation}
\label{eq:polyeq}
x_k -C_n^1\rho x_{k-1} + C_n^2\rho^2 x_{k-2}\dots + (-1)^n\rho^n x_{k-n} = 0.
\end{equation}
As is well known \cite{Elaydi} the solution of this difference equation is
$$
x_k=P(k)\rho^k,
$$
where the coefficients of the $(n\!-\!1)$st order polynomial $P(k)$ can be found from the initial conditions $x^{(0)}$.
We represent $P(k)$ in the Lagrange interpolation form as
$$
%\begin{equation}
%\label{eq:lagrange}
P(k)=\sum_{i=0}^{n-1} c_i P_i(k), \quad P_i(k)= \prod_{j\neq i}\frac{k-j}{i-j}, \quad i=0,1,\dots, n-1, \;\; j=0,1,\dots, n-1.
%\end{equation}
$$
We then have $P_i(i)=1$, $P_i(k)=0$ for $i=0,1,\dots, n-1, \; k\neq i, \; k=0,1,\dots, n-1$.
Also, for a fixed $k\geq n$, the signs of $P_i(k)$ are seen to alternate:
$$
P_{n-1}(k)>0, \quad P_{n-2}(k)<0, \;\dots, \quad {\rm sgn}\, P_0(k)=(-1)^{n-1}.
$$
The coefficients $c_i$ can be found from the initial conditions (we use the interpolation property of the polynomials $P_i$):
$$
x_0=c_0, \quad x_1=c_1\rho, \;\dots, \quad x_{n-1}=c_{n-1}\rho^{n-1}.
$$
Hence, we arrive at the closed-form solution of difference equations with all equal roots:
\begin{equation}
\label{equal}
x_k=x_0P_0(k)\rho^k+x_1P_1(k)\rho^{k-1}+\dots x_{n-1}P_{n-1}(k)\rho^{k-n+1}.
\end{equation}

In particular, for $n=2$ we have
$$
x_k=-x_0(k-1)\rho^k+x_1k\rho^{k-1},
$$
while for $n=3$ the explicit expression writes
$$
x_k=x_0\frac{(k-1)(k-2)}{2}\rho^k-x_1k(k-2)\rho^{k-1}+x_2\frac{k(k-1)}{2}\rho^{k-2}.
$$

As far as the peak value of $x_k$ is concerned, it is seen from~\eqref{equal} and definition~\eqref{peak001}
that $\eta(x^{(0)}) = \eta(-x^{(0)})$, since change of sign of the initial conditions changes just the sign of the solution.

To continue, due to the sign alternating property of the polynomials $P_i$ we conclude that for all $k\geq n$, the following relation holds:
$$
\max_{\|x^{(0)}\|_\infty = 1} x_k=|P_0(k)|\rho^k+|P_1(k)|\rho^{k-1}+\dots +|P_{n-1}(k)|\rho^{k-n+1}\, :=\,\alpha_{k,n},
$$
and this maximum is attained with $x^{(0)}=((-1)^{n-1}, \,\dots, -1,\, 1)$.

For another initial conditions $x^{(0)}=(0, \dots, 0, 1)$ we have
$$
x_k= P_{n-1}(k) \rho^{k-n+1}=C_k^{n-1}\rho^{k-n+1}\,:=\,\beta_{k,n};
$$
obviously $\alpha_{k,n}>\beta_{k,n}$.

Denote now
$$
\alpha_{n}=\max_k\alpha_{k,n}, \;\; K_\alpha = {\rm argmax} \,\alpha_{k,n}
\mbox{~~~~and~~~~}
\beta_{n}=\max_k\beta_{k,n}, \;\; K_\beta = {\rm argmax} \,\beta_{k,n}.
$$

By optimizing the expression for~$\beta_{k,n}$ over~$k$, we immediately obtain the exact expression
$$
K_{\beta} = \left\lfloor \frac{n-1}{1-\rho}\right\rfloor.
$$
By definition, peak takes place ($\beta_n>1$) if and only if $K_\beta>n-1$, which is equivalent to $\rho> \tfrac{1}{n}$.

We arrive at the following statement.
\begin{theorem}
\label{th:equal_real_roots}
For arbitrary initial conditions $x^{(0)}$, the closed-form solution of the difference equation~\eqref{eq:polyeq} is given by \eqref{equal}.
The upper bound on $\eta(x^{(0)})$ for all $\|x^{(0)}\|_{\infty} \leq 1$ is given by
$$
x_k\leq \alpha_{k,n}, \quad k=n, n+1, \dots ,
$$
and it is attained with $ x^{(0)}=((-1)^{n-1}, \dots, -1, 1)$ and $k=K_{\alpha}$.

For the initial conditions $x^{(0)}=(0,\dots ,0,1)$, the solution of~\eqref{eq:polyeq} is given by
$$
x_k =  \beta_{k,n}=C_k^{n-1}\rho^{k-n+1}, \qquad k=n, n+1, \dots ,
$$
and the maximum is attained with $k=K_{\beta} = \lfloor \tfrac{n-1}{1-\rho}\rfloor$. The peak takes place only for $\rho>\rho^* = \tfrac{1}{n}$.
\end{theorem}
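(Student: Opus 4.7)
The theorem largely consolidates computations already laid out in the preceding discussion, so my plan is to organise those observations into a clean proof, filling in the two justifications that were only asserted: the sign alternation of the $P_i(k)$ for $k\ge n$, and the optimisation of the unimodal sequence $\beta_{k,n}$ in $k$.

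The plan is to proceed in four stages. First, I would invoke the classical fact (cited to \cite{Elaydi}) that every solution of \eqref{eq:polyeq} has the form $x_k=P(k)\rho^k$ with $\deg P\le n-1$, then parametrise $P$ by its values at $k=0,1,\dots,n-1$ via the Lagrange basis $P_i(k)=\prod_{j\neq i}(k-j)/(i-j)$. Combining the interpolation property $P_i(j)=\delta_{ij}$ with the initial conditions yields $c_i=x_i/\rho^i$, which gives the closed form \eqref{equal} directly. Second, for $k\ge n$ and any fixed $i\in\{0,\dots,n-1\}$, every factor $k-j$ in the numerator of $P_i(k)$ is positive, so the sign of $P_i(k)$ equals the sign of $\prod_{j\neq i}(i-j)$. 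Exactly $n-1-i$ of these denominator factors are negative, so $\mathrm{sgn}\, P_i(k)=(-1)^{n-1-i}$, which proves the claimed alternation. Consequently, writing $x_k=\sum_{i=0}^{n-1} x_iP_i(k)\rho^{k-i}$, the constraint $\|x^{(0)}\|_\infty\le 1$ together with the triangle inequality yields $x_k\le\alpha_{k,n}$, with equality precisely when $x_i=\mathrm{sgn}\,P_i(k)=(-1)^{n-1-i}$, i.e.\ $x^{(0)}=((-1)^{n-1},\dots,-1,1)$; maximising over $k$ then gives attainment at $k=K_\alpha$.

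Third, for the specific initial datum $x^{(0)}=(0,\dots,0,1)$ only the $i=n-1$ term in \eqref{equal} survives, giving $x_k=P_{n-1}(k)\rho^{k-n+1}$; a direct simplification $P_{n-1}(k)=\prod_{j=0}^{n-2}\tfrac{k-j}{n-1-j}=\tfrac{k!}{(n-1)!(k-n+1)!}=C_k^{n-1}$ produces $\beta_{k,n}=C_k^{n-1}\rho^{k-n+1}$. To locate the maximiser, I would use the ratio test
\[
\frac{\beta_{k+1,n}}{\beta_{k,n}}=\frac{(k+1)\rho}{k-n+2},
\]
which is $\ge 1$ iff $k\le\frac{n+\rho-2}{1-\rho}=\frac{n-1}{1-\rho}-1$. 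Hence $\beta_{k,n}$ is unimodal in $k$, and the largest $k$ for which it is still non-decreasing from the previous term is $K_\beta=\lfloor\frac{n-1}{1-\rho}\rfloor$.

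Finally, for the threshold, note that $\beta_{n-1,n}=1$, so by unimodality a peak $\beta_n>1$ occurs iff some $\beta_{k,n}>1$ with $k\ge n$, which by the ratio test is equivalent to $K_\beta\ge n$, i.e.\ $\tfrac{n-1}{1-\rho}\ge n$, i.e.\ $\rho>\tfrac1n$. The only step needing genuine care is the ratio-test bookkeeping around the floor function and the boundary case $\rho=1/n$ (where $\beta_{n,n}=n\rho=1$ exactly, so no strict peak arises); everything else is straightforward unwinding of the derivations that already appear before the statement.
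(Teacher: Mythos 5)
Your proposal is correct and follows essentially the same route as the paper, which proves this theorem via the derivation immediately preceding the statement (Lagrange representation of $P(k)$, sign alternation of the $P_i$, worst-case signs of the initial data, and optimization of $\beta_{k,n}$ over $k$). You merely fill in two details the paper only asserts --- the sign count $\mathrm{sgn}\,P_i(k)=(-1)^{n-1-i}$ and the discrete ratio test $\beta_{k+1,n}/\beta_{k,n}=(k+1)\rho/(k-n+2)$ in place of the paper's differentiation in $k$ --- and your handling of the boundary case $\rho=1/n$ is in fact slightly more careful than the text.
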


Several comments are due a this point.

First, for a given order~$n$ of the equation, there always exist initial conditions such that the magnitude of peak grows as $\rho\to 1$;
same for the magnitude of the peak instant.

Second, for large values of~$n$, there exist initial conditions (e.g., $x^{(0)}=(0,\dots ,0,1)$) that yield peaks for small values of~$\rho$
(see the expression for~$\beta_{k,n}$ above); i.e., ``very stable'' higher-order difference equations may experience huge peaks.

Third, in some cases, the expressions for the peak and peak instant for small~$n$ and~$\rho$ can be given in closed form;
sometimes we present approximate estimates or asymptotic expressions as $\rho\to 1$.

A typical shape of the trajectory $x_k$ is given in Fig.~\ref{fig:traj_peak} for the two initial conditions discussed above.
\begin{figure}[h!]
	\centering
	\includegraphics[trim=0 0 0 0,clip,width=3.1in]{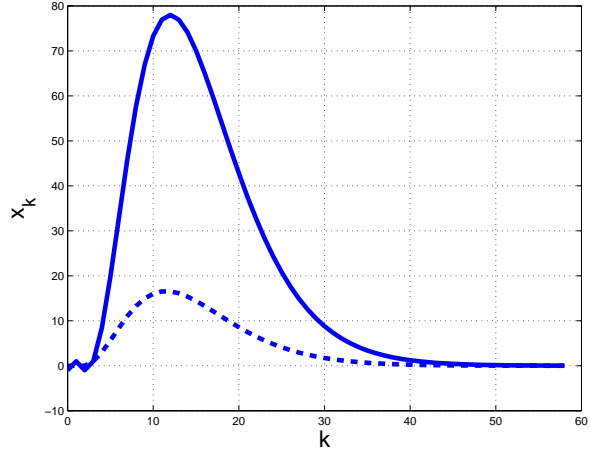}
	 \caption{The values of $x_k$~\eqref{eq:generic} for $n=4, \lambda_i\equiv\rho=0.75$, and $x^{(0)}=(0,\, 0,\, 0,\, 1)$ (dashed)
              or $x^{(0)}=(-1,\, 1,\, -1,\, 1)$ (solid)}
   \label{fig:traj_peak}
\end{figure}

Let us now specify the results of Theorem~\ref{th:equal_real_roots} for some particular cases
(below, the sign $\approx$ is used to mean asymptotics as $\rho\rightarrow 1$).
The values of $\alpha_n, \beta_n$ are found by differentiating $\alpha_{k,n}$, $\beta_{k,n}$ in $k$.

\begin{itemize}
\item
The case $n=2$.

For the initial conditions $x^{(0)}=(-1, 1)$ we have
$$
\alpha_{k,2}=(k-1)\rho^k+k\rho^{k-1}, \quad
K_{\alpha}=\frac{\rho(\log\rho-1)}{(1+\rho)\log\rho}\approx \dfrac{1}{1-\rho}, \quad
\alpha_2\approx \frac{2}{{\rm e}(1-\rho)},
$$
and peak is observed only for $\rho>\rho^* = \sqrt{2}-1$.

For the initial conditions $x^{(0)}=(0, 1)$ we have
$$
\beta_{k,2}=k\rho^{k-1}, \quad K_{\beta}=\frac{1}{-\log\rho} \approx\frac{1}{1-\rho} , \quad \beta_2\approx \frac{1}{{\rm e}(1-\rho)}.
$$

\item
The case $n=3$.

For the worst-case initial $x^{(0)}=(1, -1, 1)$ we obtain
$$
\alpha_{k,3}=\frac{(k\!-\!1)(k\!-\!2)}{2}\rho^k\!+k(k\!-\!2)\rho^{k\!-\!1}\!+\frac{k(k\!-\!1)}{2}\rho^{k\!-\!2}, \;\;
K_{\alpha}\!\approx\! \frac{2}{1\!-\!\rho}, \;\;
\alpha_3\!\approx\! \frac{8}{{\rm e}^2(1\!-\!\rho)^2},
$$
and peak takes place only for $\rho>\rho^*=0.2599$, $\rho^*$ being the largest root of the equation $\rho^3+3\rho^2+3\rho -1=0$.

For $x^{(0)}=(0, 0, 1)$ we have
$$
\beta_{k,3}=\frac{k(k-1)}{2}\rho^{k-2}, \quad K_{\beta}\approx \frac{2}{1-\rho}, \quad \beta_3\approx \frac{2}{{\rm e}^2(1-\rho)^2} .
$$

\item
Consider $n$ fixed and large and $\rho\rightarrow 1$. Then for $k/n$ large we have
$$
\beta_{k,n}\approx \frac{k^n}{n!}\rho^{n-k}, \quad \alpha_{k,n}\approx 2^{n-1} \beta_{k,n}.
$$
\end{itemize}

As it was already stressed, Theorem~\ref{th:equal_real_roots} shows that  the values of both peak and peak instant grow as $\rho\to 1$ and/or $n$ grow;
this is in a certain analogy with the continuous-time case, see  \cite{peakAiT}, Section 3.1.
We  illustrate the cumulative effect of increasing~$\rho$ and~$n$ by the special case $\rho=1-\tfrac{1}{n}$ and $x^{(0)}=(0,\dots , 0, 1)$.
Having $K_\beta = n^2-n-1$ (see Theorem~\ref{th:equal_real_roots}), we obtain the exact closed-form expression for the peak
$$
\beta_n= C_{n^2-n-1}^{n^2-2n}(1-\tfrac{1}{n})^{n^2-2n}.
$$
For some values of~$n$, Table~1 presents the values of $\beta_n, K_\beta$,
together with the values of $\alpha_n, K_\alpha$, which are obtained numerically.

\begin{table}[h!]
%\centerline
{\bf ~~~~Table 1.} The dependence of $\alpha_n, \beta_n$, and $K_\alpha, K_\beta$ on~$n$ for $\rho = 1-\tfrac{1}{n}$
\begin{center}
\begin{tabular}{|c|c|c|c|c|c|c|c|c|c|c|c|c|c|}
\hline
$n$            &    2   &     3    &     4    &           5          &           6          &         7           \\
\hline
$\beta_n$      &    1   &  2.9630  &  16.519  &         136.37       &  1.4938$\cdot 10^3$  &  2.0405$\cdot 10^4$ \\
\hline
$\alpha_n$     &  1.25  &  7.0014  &  78.002  &  1.2925$\cdot 10^3$  &  2.8408$\cdot 10^4$  &  7.7812$\cdot 10^5$ \\
\hline
$K_{\beta}$    &   1,2  &    5,6   &  11,12   &         19,20        &         29,30        &       41,42         \\
\hline
$K_{\alpha}$   &    2   &     6    &    12    &           20         &           30         &         42          \\
\hline
\end{tabular}
%\label{tabl1}
\end{center}
\end{table}

We now present several simple conditions for the existence of peak.
First, an obvious sufficient condition is $x_n>1$. The next one is formulated in terms of the coefficients of~\eqref{eq:generic};
namely, the solution experiences peak if and only if $\sum_i|a_i|>1$.
Next, for the initial $x^{(0)}=(0, \dots, 0,\, 1)$, a sufficient condition for the existence of peak is $|\sum_i \lambda_i|>1$,
and if only positive roots are considered, this conditions also becomes necessary.

Clearly, peak effects strongly depend on $x^{(0)}$. Above, we were focused at either the worst-case initial conditions or at $x^{(0)}=(0, \dots, 0,\, 1)$.
However there are many initial conditions where this effect is lacking.
For instance with $x^{(0)}=(1, \rho,\, \rho^2,\dots , \rho^{n-1})$, it can be easily checked that $x_k=\rho^k$,
hence the solutions are monotonically decreasing.
The only ``$\rho$-independent'' initial conditions leading to the absence of peak for any $\rho \in [0,1)$ is $x^{(0)}=(1, \dots,\, 1)$.

The proofs of all these assertions are immediate.

\subsection{Real roots:  Lower and upper bounds on peak}
\label{SS:lower_upper_bounds}

We next present an important result that can be thought of as bounds of peak for real roots.

Consider equation \eqref{eq:generic} with all real roots of the characteristic polynomial
$$
p(\lambda)=(\lambda-\lambda_1)(\lambda-\lambda_2)\dots (\lambda-\lambda_n).
$$
We then have
$$
p(\lambda)=\lambda^n-S_1(\lambda_1, \dots ,\lambda_n)\lambda^{n-1}+ \dots + (-1)^n S_n(\lambda_1, \dots ,\lambda_n),
$$
where $S_k(\lambda_1, \dots ,\lambda_n)$ is a homogeneous symmetric form of order $k$.
Comparing this expression with \eqref{eq:polyeq} we conclude that for
$\lambda_i\geq \rho>0$, $i=1, \dots, n$,
the coefficients of the difference equation \eqref{eq:generic} satisfy the inequalities
$$
-a_1\geq C_n^1 \rho^{n-1}, \qquad a_2\geq C_n^2 \rho^{n-2},\;\dots
$$
These necessary conditions for the inequalities $\lambda_i\geq \rho$, $i=1, \dots ,n$, to hold will be exploited below.

\begin{theorem}
\label{th:bounds_real}
Consider equation \eqref{eq:generic}  and assume that all roots $\lambda_i$ of \eqref{eq:poly} are real.

(a) For any difference equation with $\lambda_i\geq \rho>0$, $i=1, \dots ,n$,
there exists an initial value $x^{(0)}, \|x^{(0)}\|_{\infty}= 1$ such that for all $k\geq n$ we have
$$
x_k\geq \beta_{k,n}.
$$
The value of peak (if any) is lower bounded by $\beta_n$:
$$
\eta(x^{(0)}) \geq \beta_n.
$$

(b) If $|\lambda_i |\leq \rho< 1$, then for the initial condition $x^{(0)}=(0, \dots ,0, 1)$ the solution is upper bounded by
$$
|x_k|\leq \beta_{k,n}
$$
for all $k\geq n$. Thus the peak value (if any) is upper bounded by $\beta_n$:
$$
\eta(x^{(0)})\leq \beta_{n}.
$$
\end{theorem}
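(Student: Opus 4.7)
My plan is to derive a closed-form representation of the solution corresponding to the initial condition $x^{(0)} = (0, \ldots, 0, 1)$ as a complete homogeneous symmetric polynomial in the roots $\lambda_i$, and then exploit the fact that this polynomial has nonnegative coefficients. This gives a single unified tool that yields both parts of the theorem by, respectively, a monotonicity argument and a triangle-inequality argument.

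First, I would establish that for the initial condition $x^{(0)} = (0, \ldots, 0, 1)$ the solution of \eqref{eq:generic} is given, for all $k \geq n-1$, by
$$
x_k = h_{k-n+1}(\lambda_1, \ldots, \lambda_n), \qquad h_m(y_1, \ldots, y_n) = \sum_{i_1 + \cdots + i_n = m} y_1^{i_1} \cdots y_n^{i_n}.
$$
This can be shown via generating functions: the generating function of the solution satisfies $(1 + a_1 z + \cdots + a_n z^n)\,X(z) = z^{n-1}$ for this initial data, so $X(z) = z^{n-1}/\prod_i(1 - \lambda_i z) = z^{n-1}\sum_{m\geq 0} h_m(\lambda_1,\ldots,\lambda_n)\, z^m$ by the standard identity. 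The formula is valid for arbitrary (possibly repeated) roots. In particular, $h_{k-n+1}(\rho, \ldots, \rho) = C_k^{n-1}\rho^{k-n+1} = \beta_{k,n}$.

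Both parts then follow at once. For part (a), since each $h_m$ is a sum of monomials with coefficients equal to $1$, it is monotone nondecreasing in every positive argument; choosing $x^{(0)} = (0, \ldots, 0, 1)$, which satisfies $\|x^{(0)}\|_\infty = 1$, we obtain
$$
x_k = h_{k-n+1}(\lambda_1, \ldots, \lambda_n) \geq h_{k-n+1}(\rho, \ldots, \rho) = \beta_{k,n}
$$
whenever $\lambda_i \geq \rho > 0$ for all $i$, and maximizing over $k$ gives $\eta(x^{(0)}) \geq \beta_n$. For part (b), the triangle inequality applied monomial by monomial gives
$$
|x_k| \leq h_{k-n+1}(|\lambda_1|, \ldots, |\lambda_n|) \leq h_{k-n+1}(\rho, \ldots, \rho) = \beta_{k,n},
$$
and again maximizing over $k$ yields $\eta(x^{(0)}) \leq \beta_n$.

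The only real technical point is establishing the symmetric-polynomial representation rigorously, in particular for coincident roots; the generating-function route handles this uniformly, but one could alternatively prove the formula for distinct roots via Vandermonde inversion plus the classical Cauchy identity $h_m(\lambda_1, \ldots, \lambda_n) = \sum_i \lambda_i^{m+n-1}/\prod_{j\neq i}(\lambda_i - \lambda_j)$ and extend to multiple roots by continuity. An alternative, more pedestrian route would be to reason directly by induction on $k$ using the coefficient inequalities $(-1)^j a_j \geq C_n^j \rho^j$ that come from the Vieta relations; but because the recurrence $x_k = -a_1 x_{k-1} - \cdots - a_n x_{k-n}$ has terms of alternating signs, controlling the induction becomes cumbersome, and I expect the symmetric-polynomial approach to be noticeably cleaner.
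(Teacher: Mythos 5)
Your proposal is correct, and it takes a genuinely different route from the paper. The paper also starts from the initial condition $(0,\dots,0,1)$ and inverts the Vandermonde system to get $x_k=\sum_i \lambda_i^k/\prod_{j\neq i}(\lambda_i-\lambda_j)$, but it then invokes the mean value theorem for divided differences (Polya--Szeg\H{o}, Part 5, Problem 97) to write $x_k=C_k^{n-1}\tau^{k-n+1}$ for some $\tau$ in the interval spanned by the roots, and concludes by comparing $\tau$ with $\rho$ in each case; repeated roots are handled by a continuity argument. Your divided difference is exactly the complete homogeneous symmetric polynomial $h_{k-n+1}(\lambda_1,\dots,\lambda_n)$ (the Cauchy identity you cite), so the two proofs share the same underlying object, but after that point they diverge: the paper uses a mean-value representation, while you use term-by-term positivity of the monomials of $h_m$. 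Your route buys three things: it is self-contained (no external mean-value lemma), it covers coincident roots with no limiting argument since the generating-function identity is a polynomial identity in the coefficients, and your part (b) bound $|x_k|\le h_{k-n+1}(|\lambda_1|,\dots,|\lambda_n|)\le\beta_{k,n}$ extends verbatim to complex roots of modulus at most $\rho$, which the paper's real-interval argument does not. What the paper's route buys is the sharper structural statement that $x_k$ \emph{equals} $C_k^{n-1}\tau^{k-n+1}$ for a single mean point $\tau$, i.e., the trajectory coincides pointwise (in $k$, with varying $\tau$) with an equal-roots trajectory, which is a slightly stronger and more quotable fact than the two one-sided inequalities. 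Your closing remark about the induction on the recurrence being cumbersome is well taken; that route would indeed founder on the alternating signs of the $a_j$.
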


\begin{proof}
Assuming the roots  $\lambda_i$ of \eqref{eq:poly} are distinct we have the following explicit expression for $x_k$:
$$
x_k=c_1\lambda_1^{k}+c_2\lambda_2^{k}+\dots +c_n\lambda_n^{k},
$$
where $c_i$ can be found from the  initial conditions $x_{0}=x_{1}=\dots =x_{n-2}=0, x_{n-1}=1$.
Specifically, for $c\in \mathbb R^n$ we have the linear equation $Vc=e_n$, where $V$ is the standard \textit{Vandermonde matrix}
and $e_n=(0, \dots , 0, 1)^\top$. It implies $c=V^{-1}e_n=g$ with~$g$ being the last column of $V^{-1}$.
The formula for $g=(g_1, \dots g_n)$ is well known (e.g., see \cite{MS}):
$$
g_i=\frac{1}{\Pi_{j\neq i}(\lambda_i-\lambda_j)};
$$
hence
$$
x_k=\sum_i\frac{\lambda_i^k}{\prod_{j\neq i}(\lambda_i-\lambda_j)}.
$$
We now remind the reader of the following result in \cite{PS}, Part 5, Problem 97.
For a differentiable function $f(\lambda)$ and a set of points $\lambda_1, \dots , \lambda_n\in \mathbb R$ satisfying
$a\leq \lambda_i\leq b$ for all $1\leq i\leq n$, there exists $a\leq \tau \leq b$ such that
$$
\frac{f^{(n-1)}(\tau)}{(n-1)! }=\sum_i\frac{f(\lambda_i)}{\prod_{j\neq i}(\lambda_i-\lambda_j)}\,,
$$
where $f^{(n-1)}(\cdot)$ denotes the $(n-1)$st derivative of $f(\cdot)$.
Taking $f(\lambda)=\lambda^k$ we obtain
$$
x_k=\frac {k (k-1)\dots (k-n+2)\tau^{k-n+1}}{(n-1)!} = C_{k}^{n-1}\tau^{k-n+1}\,.
$$

In  case (a) we have $ a=\rho$, $b=1$, so that
$$
x_k = C_{k}^{n-1}\tau^{k-n+1}, \quad \rho\leq \tau\leq 1.
$$
Comparing this with the value of~$\beta_{k,n}$ obtained in Theorem~\ref{th:equal_real_roots} for $x^{(0)} = (0,\dots ,0,1)$
 we arrive at the desired result.

In case (b) we have $a=-\rho$, $b=\rho$ and
$$
x_k=C_k^{n-1}\tau^{k-n+1}, \quad |\tau|\leq \rho.
$$
This quantity  is no greater than the one obtained for all roots equal to~$\rho$.

We can get rid of the initial assumption (all roots are distinct)  via continuity arguments.
\end{proof}

For real root locations, Theorem~\ref{th:bounds_real} shows the importance of the case of equal roots; it provides upper and lower bounds
for the peak. Below, a more general result is formulated, though just as a conjecture, since so far, we can neither prove it, nor find a counterexample.

\vskip .1in
{\bf Worst-case Conjecture.}
Consider equation~\eqref{eq:generic}.
The maximum value of~$|x_k|$ over all root locations
on the disc of radius~$0<\rho<1$ and over all initial conditions~$x^{(0)}$ in the unit cube is attained with $\lambda_i\equiv \rho$
and $x^{(0)}=((-1)^{n-1}, \dots, -1, 1)$.

\subsection{Bounded-error autoregression}

We now proceed to the nonhomogenous equation~\eqref{eq:autoreg}.
Given a finite time horizon $t\ge n$, our goal is to estimate $\max x_t$ subject to all admissible initial conditions and all bounded disturbances~$v_k$.
We arrive at the problem which can be easily transformed into the \textit{linear program}
\begin{eqnarray}
\label{eq:LP}
&\max x_t \mbox{~~subject to}\\
\nonumber
&\|x^0\|_{\infty}\le 1, \quad |v_k|\le \varepsilon,\quad
x_k + a_1x_{k-1} + \dots + a_nx_{k-n}=v_k, \quad k=n,\dots , t.
\end{eqnarray}
Hence, for a fixed equation and fixed $t$ such estimate can be easily found numerically.
However, to highlight the dependence on the roots of the characteristic polynomial, we present some theoretical results.
We restrict our analysis with the most interesting case by assuming (as in Subsection~\ref{SS:equal_roots}) that \textit{all roots are equal}.

\begin{theorem}
\label{th:noises}
Consider equation \eqref{eq:autoreg}  and assume that all roots $\lambda_i$ of \eqref{eq:poly} are equal to~$\rho$, $0<\rho<1$,
and the noise is bounded: $|v_k|\leq \varepsilon$ for $k\ge n$.
	
Then the solution of the optimization problem \eqref{eq:LP} for any $t\geq n$ is given by $ x^0=((-1)^{n-1}, \dots, -1, 1)$
and $v_k=\varepsilon$, $k=n,\dots, t$; the following estimate holds:
$$
%\begin{equation}
%\label{eq:noise}
   \max x_t \,\le \,\alpha_{t,n} +\varepsilon \sum_{k=n}^{t}C_{t-k+n-1}^{n-1}\rho^{t-k} \,<\, \alpha_{n}+\varepsilon (1-\rho)^{-n}.
%\end{equation}	
$$
\end{theorem}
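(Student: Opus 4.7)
The plan is to exploit linearity of equation \eqref{eq:autoreg} and split the solution at time $t$ into a homogeneous part $x_t^{\rm h}$ (driven only by $x^{(0)}$, with $v_k\equiv 0$) and a particular part $x_t^{\rm p}$ (driven only by the noise, with $x^{(0)}=0$). The LP \eqref{eq:LP} then decouples: $x_t=x_t^{\rm h}+x_t^{\rm p}$ is a linear function of $(x^{(0)},v_n,\dots,v_t)$, the constraints are box constraints on these two groups of variables separately, so the joint maximum is the sum of the two individual maxima and both can be attained simultaneously.

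For $x_t^{\rm h}$ I would simply invoke Theorem~\ref{th:equal_real_roots}: under $\|x^{(0)}\|_\infty\le 1$, we have $x_t^{\rm h}\le \alpha_{t,n}$, with equality at $x^{(0)}=((-1)^{n-1},\dots,-1,1)$. For $x_t^{\rm p}$ I would use the fact that, for all equal roots $\rho$, the Green's function (the response to a unit pulse $v_k=1$ applied at a single instant $k\ge n$, with zero initial data) coincides, up to a shift of the time index, with the response of the homogeneous equation started from $(0,\dots,0,1)$: indeed such a pulse forces $x_k=1$ and $x_{k-1}=\dots=x_{k-n+1}=0$, after which the equation is homogeneous, so by the $\beta_{k,n}$ formula in Theorem~\ref{th:equal_real_roots} one obtains at time $t\ge k$ the value $C_{t-k+n-1}^{n-1}\rho^{t-k}$. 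By linearity, for arbitrary bounded $\{v_k\}$,
\begin{equation*}
x_t^{\rm p}=\sum_{k=n}^{t}C_{t-k+n-1}^{n-1}\rho^{t-k}\,v_k.
\end{equation*}
All coefficients are strictly positive, so maximizing over $|v_k|\le\varepsilon$ gives $v_k=\varepsilon$ for every $k$, and the maximum equals $\varepsilon\sum_{k=n}^{t}C_{t-k+n-1}^{n-1}\rho^{t-k}$. Adding the two pieces yields the first inequality of the theorem.

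For the closed-form upper bound, I would reindex the sum by $m=t-k$ so that it becomes $\sum_{m=0}^{t-n}C_{m+n-1}^{n-1}\rho^{m}$; extending to $m=\infty$ and using the negative-binomial series $\sum_{m=0}^{\infty}C_{m+n-1}^{n-1}\rho^{m}=(1-\rho)^{-n}$ gives a strict inequality because the tail is nonzero for $\rho>0$. Combining with $\alpha_{t,n}\le\alpha_n$ yields $\max x_t<\alpha_n+\varepsilon(1-\rho)^{-n}$.

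The only nontrivial step is identifying the Green's function for the repeated-root case with the shifted $\beta_{k,n}$ response; once this is in place, positivity of the coefficients makes the worst-case noise trivial and lets the LP decouple from the optimisation over $x^{(0)}$. I expect the proof to be short, but one should be mindful that the LP maximises the possibly signed quantity $x_t$ rather than $|x_t|$, which is precisely what keeps the coefficients aligned and allows the two worst cases to cooperate rather than cancel.
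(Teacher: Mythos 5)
Your proposal is correct and follows essentially the same route as the paper: decompose $x_t$ by linearity into the homogeneous response (maximized via Theorem~\ref{th:equal_real_roots}) plus the noise convolution with the shifted $\beta_{k,n}$ pulse response, use positivity of the coefficients to set $v_k=\varepsilon$, and bound the resulting sum by $(1-\rho)^{-n}$. The only cosmetic difference is that you evaluate the infinite sum by citing the negative-binomial series directly, whereas the paper derives the same identity via the recursion $S_n-\rho S_n=S_{n-1}$ with $S_0=1$.
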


\begin{proof}
If $x_0=x_1=\dots x_{k-1}=0$, $v_k\neq 0, \, v_i=0, \, i>k$,
then in accordance with the closed-form solution of the noise-free difference equation
with  equal roots~\eqref{eq:polyeq}, after change of notation we obtain
\begin{equation*}
  x(t) = v_k C_{t-k+n-1}^{n-1}\rho^{t-k}.
\end{equation*}

By the general formula for the solution of non-homogeneous difference equations with arbitrary initial conditions and arbitrary
exogenous noise $v_k$ we have
\begin{equation*}
	x(t)=\bar{x}_t + \tilde x_t,\quad  \tilde x_t^{\mbox{}} = \sum_{k=n}^{t}v_k C_{t-k+n-1}^{n-1}\rho^{t-k},
\end{equation*}
where $\bar{x}_t$ denotes the solution of the homogenous equation, see \eqref{eq:polyeq}.
Optimization over $x^0$ has been performed in Theorem 2.1; the optimal initial conditions were shown to be $((-1)^{n-1}, \dots, -1, 1)$.
Since the coefficient at the $v_k$ is positive, the maximizer of the $\tilde x_t$ is $v_k=\varepsilon$.
Next,
\begin{equation*}
 \sum_{k=n}^{t} C_{t-k+n-1}^{n-1}\rho^{t-k} <  \sum_{i=n-1}^{\infty} C_i^{n-1}\rho^{i-n+1} := S_n,
\end{equation*}
and it is not hard to obtain the recursive relation $S_n - \rho S_n=S_{n-1}$.
Since $S_0=1$, we arrive at $S_n=(1-\rho)^{-n}$.	
\end{proof}

We conclude that deviations of solutions from zero are caused by the two reasons:
(i)~peak effect for the homogeneous equation, induced by nonzero initial conditions
and (ii)~monotone transition process due to noise (clearly, the limiting value $x^*$ of the solution for $v_k=\varepsilon, \; k=n, \dots$,
satisfies the equation $x^*(1-\rho)^n=\varepsilon$; moreover, this holds for \emph{arbitrary} initial conditions).
For the same initial conditions, the resulting shapes of trajectories differ depending on the level of noise.
Figure~\ref{fig:noise} depicts the behavior of solutions of the $4$th order equation with equal roots $\lambda_i=0.75$, worst-case
initial $x^{(0)}=(-1,\, 1,\, -1,\, 1)$, and various values $\varepsilon$ of the noise level (cf. Fig.~\ref{fig:traj_peak}).
\begin{figure}[h!]
  \centering
    \includegraphics[trim=0 0 0 0,clip,width=3in]{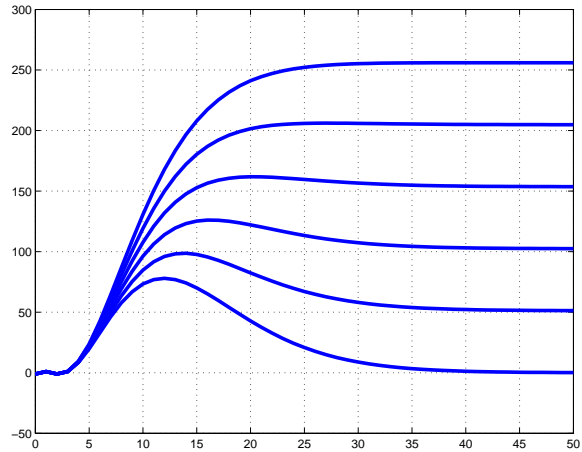}
    \caption{The values of $x_k$~\eqref{eq:generic} for $n=4$, $\lambda_i\equiv\rho=0.75$, and $x^{(0)}=(-1,\, 1,\, -1,\, 1)$ with various
    levels of noise: $\varepsilon = 0; 0.2; 0.4; 0.6; 0.8; 1.0$ (bottom to top).}
  \label{fig:noise}
\end{figure}
It is worth mentioning that for other types of difference equations (e.g., with complex, not real roots)
the solution of the optimization problem \eqref{eq:LP} can be obtained for noise with different signs.

\section{Two special equations}

In this section we analyze peak phenomena for the two input-free difference equations known from the literature.

\subsection{Markov's example}
\label{ss:Markov}

We consider the fourth-order equation borrowed from~\cite{Markov}, Chapter 6, Section 27:
\begin{equation}
\label{eq:markov}
x_{k+4} + 2\rho x_{k+3} + 3\rho^2 x_{k+2} + 2\rho^3 x_{k+1} + \rho^4 x_{k} = 0,
\end{equation}
where, $0<\rho<1$ (in~\cite{Markov}, $\rho=1$ was considered) with the characteristic polynomial
$$
p(\lambda) = \lambda^4 + 2\rho\lambda^3 + 3\rho^2\lambda^2 + 2\rho^3\lambda + \rho^4
$$
having the complex roots
$$
\lambda = \rho(-\cos \pi/3 \pm j\sin \pi/3)
$$
of multiplicity two with absolute values equal to $\rho$.  For the initial condition
$$
%\begin{equation}
%\label{eq:markov_init_cond}
x_0 = x_1 = x_3 = 0,\; x_2=-1,
%\end{equation}
$$
the solution of \eqref{eq:markov}
 is easily shown to be
$$
x_{k} = \frac{2\rho^{k-2}(k-1)}{\sqrt{3}} \sin 2\pi k/3 =
\left\{
\begin{array}{cl}
\pm\rho^{k-2}(k-1)  & \mbox{~~~for~} k \neq 3m,  \\
0 & \mbox{~~~otherwise},
\end{array}
           \right.
$$
so that $x_k\to 0$ as $k\to\infty$.
A straightforward analysis similar to the one performed in Section~\ref{SS:equal_roots} gives
$$
K \approx \frac{1}{1-\rho}
$$
for the peak instant and an asymptotic (for the values of~$\rho$ close to 1) formula for the magnitude of peak:
$$
\eta \approx\frac{1}{{\rm e}\rho(1-\rho)}\,.
$$

For instance, with $\rho=0.99$, these estimates give $K = 100$ and $\eta = 37.1595$, while the actual values obtained from
numerical simulations are equal to~$100$ and $36.9730$, respectively.
Both values are seen to grow to infinity as $\rho$ tends to unity.
It can also be shown that peak is observed only for $\rho > \rho^* = 1/\sqrt{3}$.

This example confirms the presence of the peak effect for certain equations with complex roots.

\subsection{Trinomial equations}

For difference equations with three terms, the analysis can be performed in the space of coefficients, for real and complex roots simultaneously.

In this section we consider the following equation of order $n+1$:
\begin{equation}
\label{eq:kuruklis}
x_{k+1} - a x_k + bx_{k-n} = 0,\quad k = n+1, n+2,\dots,
\end{equation}
with nonzero initial conditions $x^{(0)} = (x_0,\, x_1,\dots, x_n)$,
where $a,b\in {\mathbb R}$ are parameters; so the characteristic polynomial is
$$
p(\lambda) = \lambda^{n+1}-a\lambda^n + b.
$$
%Note that, for ease of exposition and presentation of the results in this section, we changed the notation for the indices.
Equation~\eqref{eq:kuruklis} has been first analyzed in~\cite{Kuruklis} and later was the subject of intense study in~\cite{kipnis_AiT};
also see~\cite{Elaydi} for a discussion and generalizations.

In fact, most of the efforts in the literature have been put to efficient computation of the boundary of the stability domain of~\eqref{eq:kuruklis}
on the parameter plane $\{a,b\}$, not to the analysis of possible peaks. Perhaps the easiest and transparent method for the description of the
boundary was proposed in~\cite{kipnis_AiT} via use of the $D$-decomposition technique.
By way of illustration, the stability domain $\mathcal S$ for~\eqref{eq:kuruklis} with $n=3$ is depicted in Fig.~\ref{fig:stabdom} below.
\begin{figure}[h!]
  \centering
    \includegraphics[trim=0 0 0 0,clip,width=2.5in]{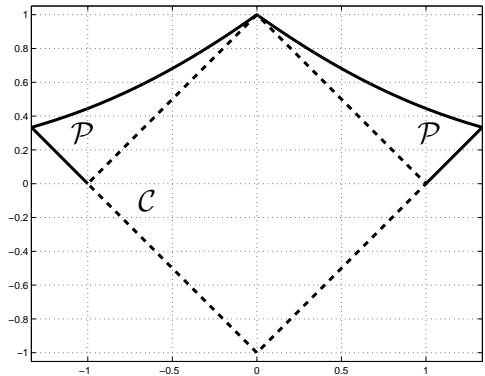}
    \caption{Stability domain and peak domain on the $\{a, b\}$ plane for equation~\eqref{eq:kuruklis} with $n=3$.}
  \label{fig:stabdom}
\end{figure}

Prior to presenting results on peak for various values of the coefficients and initial conditions, we briefly discuss the very possibility of peak.
First, for every~$n$, the domain~$\mathcal S$ contains the so-called Cohn domain, the rhombus ${\mathcal C} = \{a,b\colon |a|+|b|\leq 1\}$ which provides
simple necessary conditions for the stability of~\eqref{eq:kuruklis}. Obviously, for $(a,b)\in {\mathcal S}$, there is no peak for any
initial conditions (cf. end of Section~\ref{SS:equal_roots}),
%Moreover, it is easy to show that the solutions of~\eqref{eq:kuruklis} experience no peak in the domain
%${\mathcal N} = \{(a,b)\in {\mathcal S}\colon |a|<1\}$. So,
and the only ``peak-dangerous'' domain on the coefficient plane is ${\mathcal P}$.

From the formulae for the boundary of~$\mathcal S$ (e.g., see~\cite{Kuruklis} or~\cite{Elaydi}, p.249),
a simple upper bound ${\bf A}(\mathcal P)>\tfrac{2}{n}$
on the area of the set ${\mathcal P}$ is immediate to obtain; it is seen to decrease as the order of the equation grows.
For instance, ${\bf A}(\mathcal P) = 0.5 {\bf A}(\mathcal S)$ for $n=1$, and it constitutes less than $4\%$ of $\mathcal S$ for $n=7$.
Though peak effects in equation~\eqref{eq:kuruklis} are seen to be exotic for large dimensions, 
we show below that the magnitude of peak may be arbitrarily large.

We turn to the computation of the magnitude of peak for some specific values of the coefficients and initial conditions.

Let us first consider the following values of the coefficients $(a,b)\in {\mathcal P}$:
$$
%\begin{equation}
%\label{special_ab}
1<a\leq1+\frac{\alpha}{n}, \quad 0<\alpha<1, \qquad b := b_2 = a^{n+1}\frac{n^n}{(n+1)^{n+1}}.
%\end{equation}
$$
We do not present a proof of the feasibility of these coefficients; this follows immediately from the equations of the boundary of~$\mathcal P$.
Instead, we note that the roots of such an equation have the following properties.
The maximal in absolute value root is real, and it has multiplicity two.
By taking the derivative of~$p(\lambda)$ and equating it to zero, for the value of this root we obtain
\begin{equation}
\label{double_root}
\lambda_1 = \lambda_2 := \rho = \frac{an}{n+1}.
\end{equation}
For~$n$ odd, the rest of the roots are complex; for~$n$ even, there is another real (negative) root, which is the least in absolute value.

Let us now consider the initial conditions  of the form
$$
%\begin{equation}
%\label{init_cond_rho}
x^{(0)} = (0,\;\rho,\; 2\rho^2, \dots, n\rho^n),
%\end{equation}
$$
where~$\rho$ is given by~\eqref{double_root}. In contrast to the exposition above, we have $\|x^{(0)}\|_\infty = n\rho^n>\nolinebreak 1$;
hence, by peak we mean
$$
\eta(x^{(0)}) = \frac{1}{n\rho^n}\max_{k\ge n}|x_k|.
$$

These specific initial conditions allow for the exact closed-form expression for the peak and peak instant. Indeed, by induction,
it is immediate to obtain the solution:
$$
x_k = k\rho^k, \qquad k = n+1,n+2,\dots,
$$
and the peak instant~$K$ is given by
$$
K = \left\lfloor\frac{1}{1-\rho}\right\rfloor,
$$
which is obtained by finding the maximum value of~$k$ such that $x_k<x_{k+1}$.
By definition, the peak is observed only if $K > n$; i.e., if $\tfrac{1}{1-\rho}>n+1$, equivalently, for $\rho>\tfrac{n}{n+1}$, i.e., always
(see~\eqref{double_root}).

Respectively, the magnitude of peak is equal to
$$
\eta(x^{(0)})  \,=\, \frac{1}{n\rho^n}\,\left\lfloor\frac{1}{1-\rho}\right\rfloor\rho^{\left\lfloor\tfrac{1}{1-\rho}\right\rfloor}
               \,>\, \frac{1}{n\rho^n}\,\frac{1}{1-\rho}\rho^{\tfrac{1}{1-\rho}}
               \,\approx\, \frac{1}{n{\rm e}(1-\rho)}
$$
for fixed~$n$ and $\rho\to 1$, so that it can take arbitrarily large values.

\vskip .1in

For the ``standard'' initial conditions $x^{(0)}=(0,\; 0,\;,\dots, 1)$, a very simple lower bound for peak is available for any  feasible $|a|>1,b$.
Indeed, we have $x_k=a^{k-n}$ for $k=n+1,\dots,2n$, so that
$$
\eta(x^{(0)}) \ge x_{2n}=a^n > 1.
$$
Since $a<1+\frac{1}{n}$, this bound is not greater than~$\rm e$.
We omit a more accurate (but much more involved) analysis of the peak. %[???roughly speaking, these initial conditions do not imply large  peak effects.???]
For $n=1$ in~\eqref{eq:kuruklis}, it can be shown that  the magnitude of peak can take arbitrarily large values as~$a,b$ approach their upper bounds;
experiments show that this is also true for the general case $n>1$.

\section{Conclusions and future research}

In this paper, attention has been paid to the peaking phenomena in asymptotically stable scalar difference equations with nonzero initial conditions.
It is shown that, in a number of situations, peaks are unavoidable, and several results on the estimates of the magnitude of peak and the peak instant
have been presented. To the best of our knowledge, the very statement of the problems presented here is new.

There are numerous promising directions for future research.

First, it is highly desirable to extend the results obtained to broader classes of difference equations, root locations, and initial conditions.
For instance, of apparent interest are results on peak effects for complex roots; e.g., see Markov's example in Section~\ref{ss:Markov}.
The worst-case conjecture formulated at the end of Section~\ref{SS:lower_upper_bounds} looks natural, 
suggesting that the case of all equal roots implies the
largest deviations. Also, finding worst-case initial conditions in the unit box for classes of stable equations are of interest;
cf. Theorem~\ref{th:bounds_real}.a.

Second, the results can be extended to vector difference equations and the related many-dimensional discrete time dynamical systems.
This also gives raise to the problem of estimating the norms of powers of Schur stable matrices. There are numerous examples of large values
of~$\|A^k\|$ for Schur stable matrices (i.e., for $|\lambda_i|\leq \rho<1$ for all eigenvalues of $A$). %[??REF?? TrefethenEmbree]
However there is no systematic theory for lower bounds on powers of stable matrices.

As mentioned in the Introduction, such effects are also typical to the nonasymptotic behavior of some of the modern optimization methods,
and their analysis from the peak effect point of view looks promising.

Third, the behavior of solutions of difference equations with uncertain coefficients (say, those known to lie within given intervals)
is worth analyzing from the peak effect point of view; this relates to robust statements of the problems considered here.

\section*{Funding}

%An unnumbered section, e.g.\ \verb"\section*{Funding}", may be used for grant details, etc.\ if required and included \emph{in the non-anonymous version} before any Notes or References.

The work of the first two authors was supported by the Russian Science Foundation through grant no. 16-11-10015.
%The work of the third author was supported through ???

%\vskip .5in

\end{document}